 \definecolor{BLACK}{gray}{0}
 \definecolor{WHITE}{gray}{1}
 \definecolor{RED}{rgb}{1,0,0}
 \definecolor{GREEN}{rgb}{0,1,0}
 \definecolor{BLUE}{rgb}{0,0,1}
 \definecolor{CYAN}{cmyk}{1,0,0,0}
 \definecolor{MAGENTA}{cmyk}{0,1,0,0}
 \definecolor{YELLOW}{cmyk}{0,0,1,0}
\theoremstyle{plain}
\newtheorem{prop}{Proposition} 
\begin{document}

\title{Optimal fidelity for a quantum channel may be attained by non-maximally
entangled states}

\author{Somshubhro Bandyopadhyay and Anindita Ghosh}

\affiliation{Department of Physics and Center for Astroparticle Physics and Space
Science, Bose Institute, Block EN, Sector V, Bidhan Nagar, Kolkata
700091, India}

\email{som@bosemain.boseinst.ac.in}
\begin{abstract}
To establish an entangled state of optimal fidelity between two distant
observers when the available quantum channel is noisy, is a central
problem in quantum information theory. We consider an instance of
this problem for two-qubit systems when only a single use of the channel
and local post-processing by trace preserving operations are allowed.
We show that the optimal fidelity is obtained only when part of an
appropriate non-maximally entangled state is transmitted through the
channel. The entanglement of this state can be vanishingly small when
the channel becomes very noisy. Moreover, in the optimal case no further
local processing is required to enhance the fidelity. We further show
that local post-processing can enhance fidelity if and only if the
amount of noise is larger than a critical value and entanglement of
the transmitted state is bounded from below. A notable consequence
of these results is that the ordering of states under an entanglement
monotone can be reversed even when the states undergo the same local
interaction via a trace-preserving completely positive map. 
\end{abstract}
\maketitle
\emph{Introduction.} Quantum entanglement \cite{entanglement} between
two distant observers (Alice and Bob) has now been established as
a physical resource for quantum information processing. It enables
tasks such as quantum teleportation \cite{teleportation}, superdense
coding \cite{dense-coding}, quantum cryptography \cite{Ekert91},
and distributed quantum computation \cite{distributed-comp} that
would otherwise be impossible classically. Shared entanglement, however,
is not a \emph{given} resource and must be prepared \emph{a priori}
by sending pure entanglement across quantum channels that are typically
noisy. The mixed states thus obtained are subsequently subjected to
local processing to enhance their purity \cite{distillation-1,distillation-2,distillation-3,increaseF}
so that they can be useful for tasks such as teleportation. Thus the
problem of establishing an entangled state of high purity through
a noisy quantum channel is of fundamental interest in quantum information
theory.

The purity of a mixed state $\rho$ is expressed by its fidelity or
fully entangled fraction \cite{distillation-1,distillation-3,fidelity}.
It is defined as the maximum overlap of the state with a maximally
entangled state: \begin{eqnarray}
F\left(\rho\right) & = & \max_{\Psi}\langle\Psi|\rho|\Psi\rangle,\label{fidelity}\end{eqnarray}
where the maximization is taken over all maximally entangled states
$|\Psi\rangle$. Fidelity also assumes a central role in quantum teleportation
and entanglement distillation. For two-qubit systems $F\left(\rho\right)$
is related to the optimal teleportation fidelity $f\left(\rho\right)$
via the following relation \cite{fidelity}:\begin{eqnarray}
f\left(\rho\right) & = & \frac{2F\left(\rho\right)+1}{3}.\label{Tele-fidelity}\end{eqnarray}
Let us note that without shared entanglement the best possible fidelity
for teleportation (classical fidelity) of a completely unknown qubit
is given by $2/3$ \cite{henderson-2000}. Therefore, to outperform
a classical strategy with shared entanglement $\rho$, the condition
$F\left(\rho\right)>1/2$ must be satisfied. In the context of entanglement
distillation the same condition, i.e., $F\left(\rho\right)>1/2$,
determines whether $\rho$ can be distilled by the existing distillation
protocols \cite{distillation-1,distillation-3,Distillation04}. 

Typically, questions related to entanglement distillation and fidelity
pre-suppose that Alice and Bob already share a single copy of a mixed
entangled state $\rho$ or many copies of it. In this work we take
a step backward and ask the following: Given a quantum channel $\Lambda$,
what is the maximum achievable fidelity and what is the best strategy
to establish an entangled state for which this optimal fidelity is
attained? We consider these questions when only a single use of the
channel and local post-processing by trace-preserving operations are
allowed. The first condition implies that we are only interested in
establishing a \emph{single} copy of an entangled state, and the second
condition ensures that there is no particle loss under local operations.
The purpose of this paper is to explicitly demonstrate the counter-intuitive
nature of the answers that may be obtained in this setting.

Before we get to our results it is necessary to recall some very useful
results on fidelity. For separable states it is known that $F=1/2$.
Surprisingly there exist entangled states for which $F\leq1/2$ \cite{Badziag,Bandyopadhyay,F<=1/2},
implying that such states are not directly useful for teleportation.
Nevertheless, by local filtering, fidelity of such entangled states
can be brought above $1/2$ so that they become useful for both teleportation
and distillation \cite{increaseF}. Local filtering \cite{filtering-1,filtering02},
however, is not trace-preserving: It succeeds only with some non-zero
probability and in case of a failure the state becomes separable.
Interestingly, in Refs.$\,$\cite{Badziag,Bandyopadhyay} examples
of mixed entangled states with $F\leq1/2$ were given whose fidelity
can be increased beyond $1/2$ by trace-preserving local operations
and classical communication (TP LOCC). Subsequently, it was proved
that a state of two qubits is entangled if and only if under TP LOCC
its fidelity exceeds $1/2$ \cite{VV-I} . This led the authors in
Ref.$\,$\cite{VV-I} to define the maximum achievable fidelity $F^{*}\left(\rho\right)$
for any $2\otimes2$ density matrix $\mbox{\ensuremath{\rho}}$ as,\begin{eqnarray}
F^{*}\left(\rho\right) & = & \max_{\mbox{TP LOCC}}F\left(\rho\right)\geq F\left(\rho\right).\label{F-star-rho}\end{eqnarray}
While the exact analytical expression $F^{*}\left(\rho\right)$ is
not known, it can be obtained by solving a convex semidefinite program
\cite{VV-I}. Moreover, $F^{*}\left(\rho\right)$ was shown to be
an entanglement monotone; in particular, it quantifies the minimal
amount of mixing required to destroy the entanglement of $\rho$ \cite{VV-I}.
Here one should note that fidelity $F\left(\rho\right)$ is not an
entanglement monotone as it can increase under TP LOCC.

\emph{Formulation of the problem.} To answer the questions raised
in the beginning of the paper it is necessary to consider a two-step
process. In the first step, Alice prepares a two qubit pure entangled
state, say, $|\chi\rangle$ and sends the second qubit through the
quantum channel $\Lambda$. This results in a mixed state, possibly
entangled, $\rho\left(\chi,\Lambda\right)$ shared between them. As
it is possible to enhance the fidelity $F\left(\rho\left(\chi,\Lambda\right)\right)$
of this state by TP LOCC, the second step constitutes Alice and Bob
performing optimal trace-preserving local operations to attain the
maximum fidelity. Let us therefore define the quantity of interest:\begin{eqnarray}
\mathcal{F}\left(\Lambda\right) & = & \max_{|\chi\rangle}F^{*}\left(\rho\left(\chi,\Lambda\right)\right).\label{F-Lambda}\end{eqnarray}
We call the quantity $\mathcal{F}\left(\Lambda\right)$ maximum achievable
fidelity or optimal fidelity for the channel $\Lambda$. Clearly,
given a quantum channel $\Lambda$, the objective of Alice and Bob
is to maximize $F\left(\rho\left(\chi,\Lambda\right)\right)$ over
all TP LOCC and $|\chi\rangle$. We note that it is important to distinguish
$\mathcal{F}\left(\Lambda\right)$ from the channel fidelity considered
in Ref.$\,$\cite{fidelity}. From Eq.$\,$(\ref{Tele-fidelity})
we can also obtain the optimal teleportation fidelity for a single
use of the channel $\Lambda$: \begin{eqnarray*}
f\left(\Lambda\right) & = & \frac{2\mathcal{F}\left(\Lambda\right)+1}{3}.\end{eqnarray*}
\emph{Amplitude damping channel.} The quantum channel considered in
this work is the amplitude damping channel. The action of an amplitude
damping channel $\Lambda$ on a qubit $\sigma$ is given by:\begin{eqnarray}
\sigma\rightarrow\Lambda\left(\sigma\right) & = & M_{0}\sigma M_{0}^{\dagger}+M_{1}\sigma M_{1}^{\dagger},\label{amp-damp}\end{eqnarray}
where $M_{0}$ and $M_{1}$ are the Krauss operators defined by \begin{equation}
M_{0}=\left[\begin{array}{cc}
1 & 0\\
0 & \sqrt{1-p}\end{array}\right],\;\; M_{1}=\left[\begin{array}{cc}
0 & \sqrt{p}\\
0 & 0\end{array}\right],\label{amp-kraus}\end{equation}
with the real parameter $0\leq p\leq1$ characterizing the strength
of the channel. The channel is trace preserving, that is, $\sum_{i=0,1}M_{i}^{\dagger}M_{i}=I$.
For the noise-free case $p=0$, otherwise $0<p\leq1$. For $p=1$
the channel is entanglement breaking \cite{Note for p=1}. Therefore,
throughout this paper we only consider values of $0<p<1$. We note
that $\mathcal{F}\left(\Lambda\right)$ is a function of $p$ alone.

\emph{Summary of the results.} Intuition suggests that for any channel
$\Lambda$ the best strategy to obtain optimal fidelity is to send
part of a maximally entangled state across the channel plus local
post-processing i.e., the relation \begin{eqnarray}
\mathcal{F}\left(\Lambda\right) & = & F^{*}\left(\rho\left(\Phi^{+},\Lambda\right)\right),\label{fundu-relation}\end{eqnarray}
 where $|\Phi^{+}\rangle=\frac{1}{\sqrt{2}}\left(|00\rangle+|11\rangle\right)$
should be true. But as will be demonstrated here, the above relation
does \emph{not} hold in general. We show that the maximum achievable
fidelity $\mathcal{F}\left(\Lambda\right)$ is attained for non-maximally
entangled states for all $p$, $0<p<1$; i.e., \begin{eqnarray}
\mathcal{F}\left(\Lambda\right) & =F^{*}\left(\rho\left(\chi_{0},\Lambda\right)\right) & >F^{*}\left(\rho\left(\Phi^{+},\Lambda\right)\right),\label{fundu-1-a}\end{eqnarray}
where $|\chi_{0}\rangle$ is a non-maximally entangled state. And
when the channel is very noisy, that is, $p\approx1$, the entanglement
of $|\chi_{0}\rangle$ becomes vanishingly small, and yet it gives
the optimal value for fidelity over all transmitted states, including
maximally entangled under trace-preserving local operations. Surprisingly,
we find that to achieve the optimal value, local post-processing is
not be required: i.e., \begin{equation}
\mathcal{F}\left(\Lambda\right)=F^{*}\left(\rho\left(\chi_{0},\Lambda\right)\right)=F\left(\rho\left(\chi_{0},\Lambda\right)\right).\label{fundu-II}\end{equation}
Thus the pre-processed fidelity obtained simply by sending one half
of the \emph{appropriate} non-maximally entangled state through the
channel is actually optimal.

A consequence of the first result is that the ordering of entangled
states under some entanglement monotone can be reversed even though
the states undergo identical local interaction via a trace-preserving
completely positive map. The argument goes as this. Before the second
qubit underwent interaction with the channel $\Lambda$, we had trivially
$F^{*}\left(\Phi^{+}\right)\geq F^{*}\left(\chi_{0}\right)$. Now
after the interaction our first result implies that\begin{eqnarray}
F^{*}\left(\rho\left(\chi_{0},\Lambda\right)\right) & > & F^{*}\left(\rho\left(\Phi^{+},\Lambda\right)\right).\label{F*>F*phi}\end{eqnarray}
The conclusion now follows by noting that $F^{*}$ is an entanglement
monotone. It is interesting that the ordering does not change for
any pair of transmitted states under concurrence. For example, we
find that $C\left(\rho\left(\Phi^{+},\Lambda\right)\right)>C\left(\rho\left(\chi_{0},\Lambda\right)\right)$
where $C$ is the concurrence \cite{Concurrence}.

We further show that local trace preserving operations can enhance
the fidelity of the states $\rho\left(\chi,\Lambda\right)$ if and
only if $p_{0}<p<1$ and $C\left(\chi\left(q\right)\right)<C\left(\chi\right)\leq1$,
where $q$ is a function of $p$. The first condition implies that
if $p\leq p_{0}$ then $F\left(\rho\left(\chi,\Lambda\right)\right)$
cannot be increased by TP LOCC for any $|\chi\rangle$. The second
condition on the other hand shows that when $p>p_{0}$, fidelity can
be increased only for a subset of states $\rho\left(\chi,\Lambda\right)$:
in particular those resulting from the transmission of states $|\chi\rangle$
with relatively higher entanglement.

\textbf{Remark:} In the above results both $\mathcal{F}\left(\Lambda\right)$
and $|\chi_{\mbox{0}}\rangle$ are functions of the channel parameter
$p$. This means that for different values of $p$ different optimal
values of fidelity are obtained. The corresponding non-maximally entangled
states are different as well.

\emph{Details of the results.} We shall now prove the results. Alice
prepares a two qubit pure entangled state $|\chi\rangle=\alpha|00\rangle+\beta|11\rangle$,
where $\alpha,\beta$ are real and satisfy the conditions $\alpha\geq\beta>0$
and $\alpha^{2}+\beta^{2}=1$. She sends the second qubit through
the amplitude damping channel defined by Eq.$\,$(\ref{amp-damp}).
We therefore have,\begin{eqnarray}
\rho\left(\chi\right)\rightarrow\rho\left(\chi,\Lambda\right) & = & \sum_{i=0,1}\left(I\otimes M_{i}\right)\rho\left(\chi\right)\left(I\otimes M_{i}^{\dagger}\right),\label{rho2rhoLambda}\end{eqnarray}
where $\rho\left(\chi\right)=|\chi\rangle\langle\chi|$. The final
state $\rho\left(\chi,\Lambda\right)$ can be conveniently expressed
as: \begin{eqnarray*}
\rho\left(\chi,\Lambda\right) & = & \left(1-p\beta^{2}\right)|\eta\rangle\langle\eta|+p\beta^{2}|01\rangle\langle01|,\end{eqnarray*}
where \begin{eqnarray*}
|\eta\rangle & = & \frac{\alpha}{\sqrt{1-p\beta^{2}}}|00\rangle+\frac{\beta\sqrt{1-p}}{\sqrt{1-p\beta^{2}}}|11\rangle\end{eqnarray*}
We first obtain the fidelity $F\left(\rho\left(\chi,\Lambda\right)\right)$
before any postprocessing is performed. Define a real $3\times3$
matrix $T$ whose elements are given by $t_{ij}=\mbox{Tr}\left[\rho\left(\chi,\Lambda\right)\sigma_{i}\otimes\sigma_{j}\right]$,
where $\sigma_{i}$s are the Pauli matrices. In our case $T$ is diagonal
and $\mbox{det}T$ is negative. For the states with diagonal $T$
and $\det T<0$, $F$ is given by \cite{Badziag},\begin{eqnarray}
F & = & \frac{1}{4}\left(1+\sum_{i}|t_{ii}|\right),\label{exp-fidelity}\end{eqnarray}
which in our case turns out to be,\begin{eqnarray}
F\left(\rho\left(\chi,\Lambda\right)\right) & = & \frac{1}{2}\left(1+2\alpha\beta\sqrt{1-p}-p\beta^{2}\right).\label{chi-fidelity}\end{eqnarray}
The concurrence \cite{Concurrence} of $\rho\left(\chi,\Lambda\right)$
is given by $C=2\alpha\beta\sqrt{1-p}$. It is easy to check that
$F$ is not always greater then $1/2$ even though $C\left(\rho\left(\chi,\Lambda\right)\right)$
is always non zero as long as $p\neq1$. For example, if $|\chi\rangle=|\Phi^{+}\rangle$,
then for all values $p\geq2\left(\sqrt{2}-1\right)$, $F\leq1/2$. 

The useful observation to be made here is that the maximum of $F$
(for any $p$, $0<p<1$) is not obtained when $|\chi\rangle=|\Phi^{+}\rangle.$
In particular,\begin{eqnarray}
F_{\max} & = & F\left(\rho\left(\chi_{0},\Lambda\right)\right)=1-\frac{p}{2},\label{Fmax}\end{eqnarray}
where\begin{eqnarray}
|\chi_{0}\rangle & = & \frac{1}{\sqrt{2-p}}|00\rangle+\sqrt{\frac{1-p}{2-p}}|11\rangle.\label{chi-zero}\end{eqnarray}
It is worth noting that $F_{\mbox{max}}$ is the maximum eigenvalue
of the density matrix $\rho\left(\Phi^{+},\Lambda\right)$, and $|\chi_{0}\rangle$
is the corresponding eigenstate. Indeed, for any quantum channel $\$$,
the maximum pre-processed fidelity is given by the maximum eigenvalue
of the density matrix $\rho\left(\Phi^{+},\$\right)$ and is obtained
by sending one half of the corresponding eigenstate through the channel
(see Ref. {[}21{]} for details). 

Equation$\,\left(\ref{Fmax}\right)$ while surprising, is not conclusive
because the maximum achievable fidelity $\mathcal{F}$ may still be
obtained for $|\chi\rangle=|\Phi^{+}\rangle$ \emph{after} Alice and
Bob perform trace-preserving LOCC: i.e., the possibility of $\mathcal{F}\left(\Lambda\right)=F^{*}\left(\rho\left(\Phi^{+},\Lambda\right)\right)$
cannot be ruled out immediately. The following proposition, however,
negates this possibility.

\begin{prop} $\mathcal{F}\left(\Lambda\right)>F^{*}\left(\rho\left(\Phi^{+},\Lambda\right)\right)$
for any $p,$ where $0<p<1$. \end{prop}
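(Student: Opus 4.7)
The plan is to prove $\mathcal{F}(\Lambda)>F^{*}\bigl(\rho(\Phi^{+},\Lambda)\bigr)$ by first establishing the cheap lower bound $\mathcal{F}(\Lambda)\ge 1-p/2$ and then the nontrivial strict upper bound $F^{*}\bigl(\rho(\Phi^{+},\Lambda)\bigr)<1-p/2$ for every $0<p<1$. The first step is immediate: by the definition (\ref{F-Lambda}) together with $F^{*}\ge F$ and the maximum pre-processed fidelity (\ref{Fmax}), I can write $\mathcal{F}(\Lambda)\ge F^{*}\bigl(\rho(\chi_{0},\Lambda)\bigr)\ge F\bigl(\rho(\chi_{0},\Lambda)\bigr)=1-p/2$. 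The proposition is thus reduced to controlling $F^{*}$ of the Choi state $\rho(\Phi^{+},\Lambda)$.

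To bound $F^{*}\bigl(\rho(\Phi^{+},\Lambda)\bigr)$ from above I would first put this state in its canonical form. It is the rank-two X-state $(1-p/2)|\chi_{0}\rangle\langle\chi_{0}|+(p/2)|01\rangle\langle 01|$ with Bloch vectors $\vec{r}=0$, $\vec{s}=(0,0,p)$ and diagonal correlation matrix $T=\mathrm{diag}\bigl(\sqrt{1-p},-\sqrt{1-p},1-p\bigr)$. Its pre-processed fidelity from Eq.~(\ref{chi-fidelity}) is $F\bigl(\rho(\Phi^{+},\Lambda)\bigr)=(2-p+2\sqrt{1-p})/4$, and this already lies strictly below $1-p/2$ because $\sqrt{1-p}<1-p/2$ on $(0,1)$. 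To account for post-processing I would then invoke the semidefinite-program characterization of $F^{*}$ from Ref.~\cite{VV-I}. The symmetries of $\rho(\Phi^{+},\Lambda)$---Alice's Bloch vector vanishes, $|T_{xx}|=|T_{yy}|$, and Bob's Bloch vector is aligned with the $z$-axis---restrict the candidates for the optimal TP LOCC to a narrow family consisting of Alice's diagonal filter composed with local unitaries and a Bell-diagonal twirl on the joint output. The SDP then collapses to a one-parameter optimization which I would solve in closed form, obtaining an explicit expression for $F^{*}\bigl(\rho(\Phi^{+},\Lambda)\bigr)$ as a function of $p$.

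The proof concludes with an elementary algebraic check that this explicit $F^{*}\bigl(\rho(\Phi^{+},\Lambda)\bigr)$ remains strictly below $1-p/2$ for all $p\in(0,1)$. The main obstacle is exactly the middle step: the paper itself stresses that no analytical expression for $F^{*}$ is known in general, so one must genuinely solve the VV SDP for this specific X-state family and demonstrate that even the best trace-preserving local processing cannot close the gap between $F\bigl(\rho(\Phi^{+},\Lambda)\bigr)$ and $1-p/2$. This is also where the asymmetry between transmitting $|\Phi^{+}\rangle$ and $|\chi_{0}\rangle$ is driven to the surface, foreshadowing Eq.~(\ref{fundu-II}): the optimal value $1-p/2$ is already achieved without any post-processing once one instead sends the non-maximally entangled state $|\chi_{0}\rangle$.
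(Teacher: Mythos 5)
Your first step coincides with the paper's: $\mathcal{F}(\Lambda)\ge F(\rho(\chi_{0},\Lambda))=1-p/2$ is exactly how the proof there begins, and your identification of $\rho(\Phi^{+},\Lambda)$ as the rank-two state $(1-p/2)|\chi_{0}\rangle\langle\chi_{0}|+(p/2)|01\rangle\langle01|$ with $T=\mathrm{diag}(\sqrt{1-p},-\sqrt{1-p},1-p)$ is correct. For the second step, however, you have chosen the route the paper explicitly mentions and then deliberately avoids: computing $F^{*}(\rho(\Phi^{+},\Lambda))$ in closed form from the Verstraete--Verschelde semidefinite program. That route does work --- once Eqs.\,(\ref{Fstar-chi-1A}) and (\ref{Fstar-chi-1B}) are in hand, setting $\alpha^{2}=1/2$ gives $F^{*}=\frac{1}{2}\left(1+\sqrt{1-p}-p/2\right)$ for $p\le p_{0}$ and $F^{*}=\frac{1}{2}\left(1+\frac{1-p}{2p}\right)$ for $p>p_{0}$, and both are elementarily checked to lie strictly below $1-p/2$ (the first because $\sqrt{1-p}<1-p/2$ for $p>0$, the second because $p>p_{0}>1/2$). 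But as written your proposal does not actually carry out the SDP: the ``narrow family'' of candidate optimal TP LOCC and the ``one-parameter optimization'' are asserted rather than derived, and you yourself flag this as the main obstacle. Until that computation is supplied, the argument is a plan, not a proof.

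The paper's own proof sidesteps the SDP entirely with a short argument you should note. By Ref.\,\cite{F<=1/2}, $F^{*}(\rho)\le\frac{1}{2}\left(1+N(\rho)\right)$ where $N(\rho)=\max\left[0,-2\lambda_{\min}(\rho^{\Gamma})\right]$, and equality holds if and only if the eigenvector belonging to the negative eigenvalue of $\rho^{\Gamma}$ is maximally entangled. For $\rho(\Phi^{+},\Lambda)$ the partial transpose has a single negative eigenvalue $-(1-p)/2$ whose eigenvector has Schmidt coefficients proportional to $(1,\sqrt{1-p})$, hence is maximally entangled only at $p=0$; therefore $F^{*}(\rho(\Phi^{+},\Lambda))<\frac{1}{2}\left(1+(1-p)\right)=1-p/2\le\mathcal{F}(\Lambda)$ strictly for all $0<p<1$. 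This requires only diagonalizing a $2\times2$ block of $\rho^{\Gamma}$ and is the cleaner way to close your middle step; alternatively, you may simply defer to the explicit formulas (\ref{Fstar-chi-1A})--(\ref{Fstar-chi-1B}) once they have been established and finish with the algebra above.
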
 

\begin{proof} The result can be proved by computing $F^{*}\left(\rho\left(\Phi^{+},\Lambda\right)\right)$
{[}see Eqs.$\,$(\ref{Fstar-chi-1A}) and (\ref{Fstar-chi-1B}){]}.
Here we give an alternative proof which does not require computing
it explicitly. We first note that by definition $\mathcal{F}\left(\Lambda\right)\geq F_{\mbox{max}}$,
where $F_{\mbox{max}}$ is given by (\ref{Fmax}). Now, for any density
matrix $\rho$, $F^{*}\left(\rho\right)\leq\frac{1}{2}\left(1+N\left(\rho\right)\right),$where
$N\left(\rho\right)=\max\left[0,-2\lambda_{\mbox{min}}\left(\rho^{\Gamma}\right)\right]$
and $\rho^{\Gamma}$ is partial transpose of $\rho$ \cite{F<=1/2}.
Importantly, the equality is achieved iff the eigenvector corresponding
to the negative eigenvalue of $\rho^{\Gamma}$ is maximally entangled
\cite{F<=1/2}. It can be easily checked that the eigenvector corresponding
to the negative eigenvalue of $\rho^{\Gamma}\left(\Phi^{+},\Lambda\right)$
is not maximally entangled unless $p=0$. It therefore follows that\begin{eqnarray}
F^{*}\left(\rho\left(\Phi^{+},\Lambda\right)\right) & < & \frac{1}{2}\left[1+N\left(\rho\left(\Phi^{+},\Lambda\right)\right)\right]\nonumber \\
 & = & 1-\frac{p}{2}=F_{\mbox{max}}\leq\mathcal{F}\left(\Lambda\right)\label{prop-1-eq-2}\end{eqnarray}
This concludes the proof. \end{proof} 

\textbf{Remark:} As we have explained before, the above result shows
that a trace-preserving completely positive map can reverse the ordering
of entangled states for the entanglement monotone $F^{*}$. Here we
simply note that this reversal is not present when the entanglement
measure is concurrence. It is easy to see that for any pair of pure
states $|\chi_{1}\rangle,|\chi_{2}\rangle$, if $C\left(\chi_{1}\right)\geq C\left(\chi_{2}\right)$
then after the interaction $C\left(\rho\left(\chi_{1},\Lambda\right)\right)\geq C\left(\rho\left(\chi_{2},\Lambda\right)\right)$,
where $C\left(\rho\left(\chi,\Lambda\right)\right)=2\alpha\beta\sqrt{1-p}$. 

We will now obtain an exact expression for $F^{*}\left(\rho\left(\chi,\Lambda\right)\right)$
for any $|\chi\rangle$. In Ref.$\,$\cite{VV-I} it was shown that
for any given $2\otimes2$ density matrix $\rho$ the maximum achievable
fidelity $F^{*}\left(\rho\right)$ by TP LOCC can be found by solving
the convex semidefinite program: Maximize\begin{eqnarray}
F^{*} & = & \frac{1}{2}-\mbox{Tr}\left(X\rho^{\Gamma}\right)\label{Fstar-maximize}\end{eqnarray}
under the constraints,\begin{eqnarray*}
0\leq & X & \leq I_{4}\\
-\frac{I_{4}}{2}\leq & X^{\Gamma} & \leq\frac{I_{4}}{2}\end{eqnarray*}
where $X$ is a $4\times4$ matrix and $\Gamma$ denotes partial transposition.
Moreover, the optimal $X$ is of rank $1$. Solving the above in our
case using the symmetries of the state $\rho\left(\chi,\Lambda\right)$,
we obtain the following expressions for maximum achievable fidelity:\begin{eqnarray}
F^{*}\left(\rho\left(\chi,\Lambda\right)\right)=F_{1}^{*} & = & \frac{1}{2}\left(1+2\alpha\beta\sqrt{1-p}-p\beta^{2}\right)\label{Fstar-chi-1A}\\
 & \mbox{if} & \frac{p^{2}}{1-p+p^{2}}\leq\alpha^{2}<1\nonumber \\
F^{*}\left(\rho\left(\chi,\Lambda\right)\right)=F_{2}^{*} & = & \frac{1}{2}\left(1+\alpha^{2}\frac{1-p}{p}\right),\label{Fstar-chi-1B}\\
 & \mbox{if} & \frac{1}{2}\leq\alpha^{2}<\frac{p^{2}}{1-p+p^{2}}\nonumber \end{eqnarray}
Maximum achievable fidelity for any ordered pair $\left(p,|\chi\rangle\right)$
can be obtained from the above equations. Let $g\left(p\right)=\frac{p^{2}}{1-p+p^{2}}$.
We first observe that the cases corresponding to $F_{2}^{*}$ arise
only when $g\left(p\right)>\frac{1}{2},$ or equivalently $p>\frac{1}{2}\left(\sqrt{5}-1\right)=p_{0}$.
Therefore, when $p\leq p_{0}$, then for any state $|\chi\rangle$,
we have $F^{*}=F_{1}^{*}=F$, where the last equality follows by comparing
Eqs.$\,$(\ref{chi-fidelity}) and (\ref{Fstar-chi-1A}). In these
cases, therefore, there is no benefit from local processing of the
states $\rho\left(\chi,\Lambda\right)$. On the other hand, when $p>p_{0}$,
the question of enhancing the fidelity of $\rho\left(\chi,\Lambda\right)$
depends on entanglement of the state $|\chi\rangle$. For any $p$,
where $p_{0}<p<1$, the transmitted states $|\chi\rangle$ fall in
two distinct classes: (a) those satisfying $\frac{1}{2}\leq\alpha^{2}<g\left(p\right)$
or equivalently $C\left(g\left(p\right)\right)<C\left(\chi\right)\leq1$,
and (b) those for which $g\left(p\right)\leq\alpha^{2}<1$ or equivalently
$0<C\left(\chi\right)\leq C\left(g\left(p\right)\right)$. It is to
be understood that $C\left(g\left(p\right)\right)$ is the shorthand
notation of the concurrence of the state $\left|\chi\left(\alpha^{2}=g\left(p\right)\right)\right\rangle $.
Now every state in class (a) is \emph{more} entangled than every state
in class (b). Therefore, when $p>p_{0}$, the fidelity of the resulting
mixed states can only be increased if the transmitted state belongs
to class (a), that is, the class of states with relatively higher
entanglement. Summarizing the above we have the next proposition. 

\begin{prop} Local trace preserving operations can enhance the fidelity
of the states $\rho\left(\chi,\Lambda\right)$ if and only if $p_{0}<p<1$
and $C\left(\chi\left(q\right)\right)<C\left(\chi\right)\leq1$, where
$q=g\left(p\right)$. \end{prop}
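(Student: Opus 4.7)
The plan is to read Proposition 2 directly off the piecewise formulas (\ref{Fstar-chi-1A})--(\ref{Fstar-chi-1B}) for $F^{*}(\rho(\chi,\Lambda))$, combined with the observation that the pre-processed fidelity in Eq.~(\ref{chi-fidelity}) coincides with the first branch $F_{1}^{*}$. Enhancement by TP LOCC means $F^{*}(\rho(\chi,\Lambda))>F(\rho(\chi,\Lambda))$, and since $F=F_{1}^{*}$, this is equivalent to being in the regime where $F^{*}=F_{2}^{*}$ strictly exceeds $F_{1}^{*}$. So the whole statement reduces to characterizing this regime in terms of $p$ and $C(\chi)$.

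First I would determine when the second branch is realized at all. The defining interval for that branch, $\tfrac{1}{2}\leq\alpha^{2}<g(p)$, is non-empty precisely when $g(p)>\tfrac{1}{2}$, i.e., $\tfrac{p^{2}}{1-p+p^{2}}>\tfrac{1}{2}$. Clearing denominators gives the quadratic inequality $p^{2}+p-1>0$ on $(0,1)$, whose positive root is $p_{0}=\tfrac{\sqrt{5}-1}{2}$. Hence the first condition of the proposition, $p_{0}<p<1$, is exactly the requirement that some $\alpha^{2}\in[\tfrac{1}{2},1)$ can fall in the second branch.

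Second, I would confirm that inside that branch the inequality $F_{2}^{*}>F_{1}^{*}$ is strict. A direct substitution shows $F_{1}^{*}=F_{2}^{*}=\tfrac{1}{2}\bigl(1+\tfrac{p(1-p)}{1-p+p^{2}}\bigr)$ precisely at the matching value $\alpha^{2}=g(p)$, and the SDP construction of $F^{*}$ selects whichever branch is larger, so $F_{2}^{*}>F_{1}^{*}$ throughout $\tfrac{1}{2}\leq\alpha^{2}<g(p)$; if one wants an elementary confirmation, it suffices to check that $\Delta(\alpha^{2})=F_{2}^{*}-F_{1}^{*}$, viewed as a function of $\alpha^{2}$, is monotonically decreasing to zero on this interval, which is a one-variable sign check.

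Third, I would translate the remaining constraint $\alpha^{2}<g(p)$ into the concurrence condition asserted in the proposition. With $\alpha\geq\beta>0$ and $\alpha^{2}+\beta^{2}=1$, one has $C(\chi)=2\alpha\sqrt{1-\alpha^{2}}$, which is strictly decreasing in $\alpha^{2}$ on $[\tfrac{1}{2},1]$. Therefore $\alpha^{2}<g(p)$ is equivalent to $C(\chi)>C\bigl(\chi(g(p))\bigr)$, and writing $q=g(p)$ yields the stated second condition $C(\chi(q))<C(\chi)\leq 1$. The only genuinely technical step is the sign analysis in the third paragraph; everything else is a direct rearrangement of the formulas already in hand.
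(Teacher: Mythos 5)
Your proof is correct and follows essentially the same route as the paper: both read the result off the piecewise formulas (\ref{Fstar-chi-1A})--(\ref{Fstar-chi-1B}), identify $p>p_{0}$ as exactly the condition for the second branch to be non-empty, and convert $\alpha^{2}<g(p)$ into the concurrence condition via the monotonicity of $C(\chi)=2\alpha\sqrt{1-\alpha^{2}}$ on $\alpha^{2}\in[\tfrac{1}{2},1]$. The one point where you are more explicit than the paper is the strictness of $F_{2}^{*}>F_{1}^{*}$ on $\tfrac{1}{2}\leq\alpha^{2}<g(p)$, which the paper leaves implicit; your sign check can be closed cleanly by noting $F_{2}^{*}-F_{1}^{*}=\tfrac{1}{2}\bigl(\alpha\sqrt{(1-p)/p}-\beta\sqrt{p}\bigr)^{2}$, which vanishes only at $\alpha^{2}=g(p)$.
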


Equations (\ref{Fstar-chi-1A}) and (\ref{Fstar-chi-1B}) contain
all information that we need to know to obtain $\mathcal{F}\mbox{\ensuremath{\left(\Lambda\right)}}$.
Let us denote\begin{eqnarray*}
\mathbb{F}_{1}\left(\Lambda\right) & = & \max_{|\chi\rangle}F_{1}^{*},\end{eqnarray*}
where the maximum is taken over all pure states $|\chi\rangle$ satisfying
the condition $g\left(p\right)\leq\alpha^{2}<1$, and \begin{eqnarray*}
\mathbb{F}_{2}\left(\Lambda\right) & = & \max_{|\chi\rangle}F_{2}^{*}\end{eqnarray*}
where the maximum is taken over all pure states $|\chi\rangle$ satisfying
the condition $\frac{1}{2}\leq\alpha^{2}<g\left(p\right)$. Thus the
optimal fidelity for the channel is given by \begin{eqnarray}
\mathcal{F}\left(\Lambda\right) & = & \mathbb{F}_{1}\left(\Lambda\right)\;\mbox{\;\;\;\;\;\;\;\;\;\;\;\;\,\:\;\;\;\;\;\;\;\;\; if}\; p\leq p_{0},\label{F1-Lambda}\\
\mathcal{F}\left(\Lambda\right) & = & \max\left\{ \mathbb{F}_{1}\left(\Lambda\right),\mathbb{F}_{2}\left(\Lambda\right)\right\} \;\;\mbox{if}\; p>p_{0},\label{F2-Lambda}\end{eqnarray}
where $p_{0}=\frac{1}{2}\left(\sqrt{5}-1\right)$.

\begin{prop} The maximum achievable fidelity $\mathcal{F}\left(\Lambda\right)$
is given by $F_{\max}=1-\frac{p}{2}$ for all $p$, $0<p<1$.\end{prop}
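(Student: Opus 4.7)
The plan is to compute $\mathbb{F}_1(\Lambda)$ and $\mathbb{F}_2(\Lambda)$ separately and then combine them via equations $(\ref{F1-Lambda})$ and $(\ref{F2-Lambda})$. Since $F_1^*$ coincides with the pre-processed fidelity $F(\rho(\chi,\Lambda))$ from $(\ref{chi-fidelity})$, and since we already know from $(\ref{Fmax})$ that the unconstrained maximum of this quantity (over $\alpha^2 \in [1/2,1)$) equals $1-p/2$ and is attained at $\alpha_0^2 = 1/(2-p)$, the first task is simply to verify that this optimizer lies in the admissible range $\alpha^2 \in [g(p),1)$ defining $\mathbb{F}_1$. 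A quick algebraic check of the inequality $1/(2-p) \geq p^2/(1-p+p^2)$ reduces (after clearing denominators) to $(1-p)^2(1+p) \geq 0$, which holds on $(0,1)$. Hence $\mathbb{F}_1(\Lambda) = 1-p/2$.

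Next I would bound $\mathbb{F}_2(\Lambda)$, which only needs to be controlled in the regime $p > p_0$. Since $F_2^* = \tfrac{1}{2}(1 + \alpha^2(1-p)/p)$ is strictly increasing in $\alpha^2$, its supremum on $[\tfrac{1}{2}, g(p))$ is approached at $\alpha^2 \to g(p)$, giving the value
\begin{equation*}
\mathbb{F}_2(\Lambda) \;=\; \tfrac{1}{2}\Bigl(1 + \tfrac{p(1-p)}{1-p+p^2}\Bigr).
\end{equation*}
(One can also note, as a consistency check, that $F_1^*$ evaluated at $\alpha^2 = g(p)$ gives the same number, so the two branches match continuously at the boundary.)

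The remaining step is to compare $\mathbb{F}_2(\Lambda)$ with $1-p/2$. The inequality $1-p/2 > \mathbb{F}_2(\Lambda)$ simplifies, after using $1-p > 0$, to $1-p+p^2 > p$, i.e.\ $(1-p)^2 > 0$, which holds on $(0,1)$. Thus $\mathbb{F}_1(\Lambda) > \mathbb{F}_2(\Lambda)$, and combining with $(\ref{F1-Lambda})$ and $(\ref{F2-Lambda})$ yields $\mathcal{F}(\Lambda) = 1 - p/2$ in both regimes $p \leq p_0$ and $p > p_0$.

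I do not expect any genuine obstacle: the only subtlety is bookkeeping the piecewise definition of $F^*$ from $(\ref{Fstar-chi-1A})$--$(\ref{Fstar-chi-1B})$ and confirming that the interior critical point of $F_1^*$ lands in the admissible region rather than at the boundary where $F_1^*$ and $F_2^*$ meet. Once that is secured, everything reduces to elementary algebra, and as a byproduct one obtains the pleasing observation that $|\chi_0\rangle$ is already optimal \emph{without} any TP LOCC post-processing, recovering $(\ref{fundu-II})$.
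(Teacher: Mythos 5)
Your proposal is correct and follows essentially the same route as the paper's proof: verify that the unconstrained maximizer $\alpha_0^2=1/(2-p)$ of $F_1^*$ satisfies $\alpha^2\geq g(p)$ so that $\mathbb{F}_1(\Lambda)=1-p/2$, then bound $\mathbb{F}_2(\Lambda)$ by the boundary value $\frac{1}{2}\bigl(1+g(p)\frac{1-p}{p}\bigr)$ and check that $1-p/2$ strictly exceeds it via $(1-p)^2>0$. The only cosmetic difference is that the paper states the bound on $\mathbb{F}_2(\Lambda)$ as a strict inequality (the supremum over the half-open interval is not attained), whereas you write it as an equality, but this does not affect the conclusion.
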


\begin{proof} From Eqs. (\ref{F1-Lambda}) and (\ref{F2-Lambda})
it is clear that two cases have to be considered. We first consider
the case when $p\leq p_{0}$. First observe that $F_{1}^{*}=F\left(\rho\left(\chi,\Lambda\right)\right)$.
Therefore,\begin{eqnarray*}
\mathbb{F}_{1}\left(\Lambda\right)=\max_{|\chi\rangle}F_{1}^{*} & = & \max_{|\chi\rangle}F\left(\rho\left(\chi,\Lambda\right)\right),\end{eqnarray*}
where the maximum is taken over all pure states $|\chi\rangle$ such
that $\alpha^{2}\geq g\left(p\right)$. But $F_{\mbox{max}}$ is obtained
for the state $|\chi_{0}\rangle$ given by Eq.$\,$(\ref{chi-zero})
which already satisfies the condition $\alpha^{2}=\frac{1}{2-p}>g\left(p\right)$
for any $p$, $0<p<1$. Thus we have proven that, for $p\leq p_{0}$
\[
\mathcal{F}\left(\Lambda\right)=F_{\mbox{max}}=1-\frac{p}{2},\]
We now consider the case when $p>p_{0}$. From Eq.$\,$(\ref{Fstar-chi-1B})
we can get an upper bound on $\mathbb{F}_{2}\left(\Lambda\right)$,
\begin{eqnarray*}
\mathbb{F}_{2}\left(\Lambda\right) & < & \frac{1}{2}\left(1+g\left(p\right)\frac{1-p}{p}\right)\end{eqnarray*}

It is now easy to check that $F_{\mbox{max}}>\frac{1}{2}\left(1+g\left(p\right)\frac{1-p}{p}\right)$
for every $p$, $0<p<1$. Thus $F_{\mbox{max}}>\mathbb{F}_{2}\left(\Lambda\right)$.
This implies that if $p>p_{0}$, the optimal fidelity is not attained
by any pure state satisfying $\frac{1}{2}\leq\alpha^{2}<g\left(p\right)$.
Instead the optimal fidelity is obtained, once again, for the state
$|\chi_{0}\rangle$. Noting that $\mathbb{F}_{1}=F_{\mbox{max}}$,
we have therefore proven that for $p>p_{0}$\[
\mathcal{F}\left(\Lambda\right)=F_{\mbox{max}}=1-\frac{p}{2}.\]
This concludes the proof. \end{proof} 

\textbf{Remark:} The maximum achievable fidelity $\mathcal{F}\left(\Lambda\right)$
being equal to $F_{\mbox{max}}$ shows that post-processing by TP
LOCC is not necessary to achieve the optimal value as long as the
appropriate non-maximally entangled state $|\chi_{0}\left(p\right)\rangle$
is transmitted. This also suggests that enhancing of fidelity by TP
LOCC is possibly a sub-optimal phenomenon. While TP LOCC can certainly
increase fidelity for some states, it may not be the case that the
optimal fidelity for the channel is obtained that way.

\textbf{Remark:} The concurrence of $|\chi_{0}\rangle$ for which
the optimal fidelity is obtained is given by $C\left(\chi_{0}\right)=2\sqrt{1-p}/\left(2-p\right)$.
Because $C\left(\chi_{0}\right)$ is a monotonically decreasing function
of $p$, this shows that if the channel is very noisy, that is, $p\approx1$,
the concurrence of the state $|\chi_{0}\rangle$ becomes arbitrarily
close to zero. Perhaps more interesting is the behavior of $C\left(\chi_{0}\right)$
with $p$. Figure 1 shows that the concurrence decreases with $p$
rather slowly until $p$ enters the {}``very noisy'' domain, wherein
it starts to fall quite rapidly. For example, for $p=0.75$, $C\left(\chi_{0}\right)=0.8$,
whereas for $p=0.999$, $C\left(\chi_{0}\right)=0.063$.

\emph{Discussions:} Several interesting questions arise in the context
of the results reported. For example, for which other quantum channels
can similar results be observed? A possible way to explore this is
to characterize the quantum channels where the maximum fidelity (before
any post processing by TP LOCC) is obtained by non-maximally entangled
states. The channels that show this behavior are those with the property
that the eigenvector corresponding to the maximum eigenvalue of $\rho\left(\Phi^{+},\$\right)$
($\$$ is a quantum channel) is not maximally entangled \cite{verstaete-2002}.
The amplitude damping channel belongs to this class but phase damping
and depolarizing channels do not. Thus if the channel is phase damping
or depolarizing then the maximum pre-processed fidelity is always
attained by sending part of a maximally entangled state through the
channel. Despite these observations, a complete characterization of
channels exhibiting properties such as the ones presented here should
be an interesting problem for future studies. Another question of
interest is whether similar results can be observed in the regime
of finite copies. The regime of finite copies is non-asymptotic but
of considerable practical interest because such cases may be realized
experimentally. 

\emph{Conclusions:} To conclude, we have investigated the question
of optimal fidelity for a given quantum channel and what is the best
protocol to achieve the optimal value. While the results presented
in this paper illustrate many interesting features that go against
conventional intuition, it is likely that they are not generic features
of quantum channels. Nevertheless we certainly hope that they would
contribute to our understanding of quantum channels and fidelity. 

\includegraphics[clip,width=10cm,height=9cm]{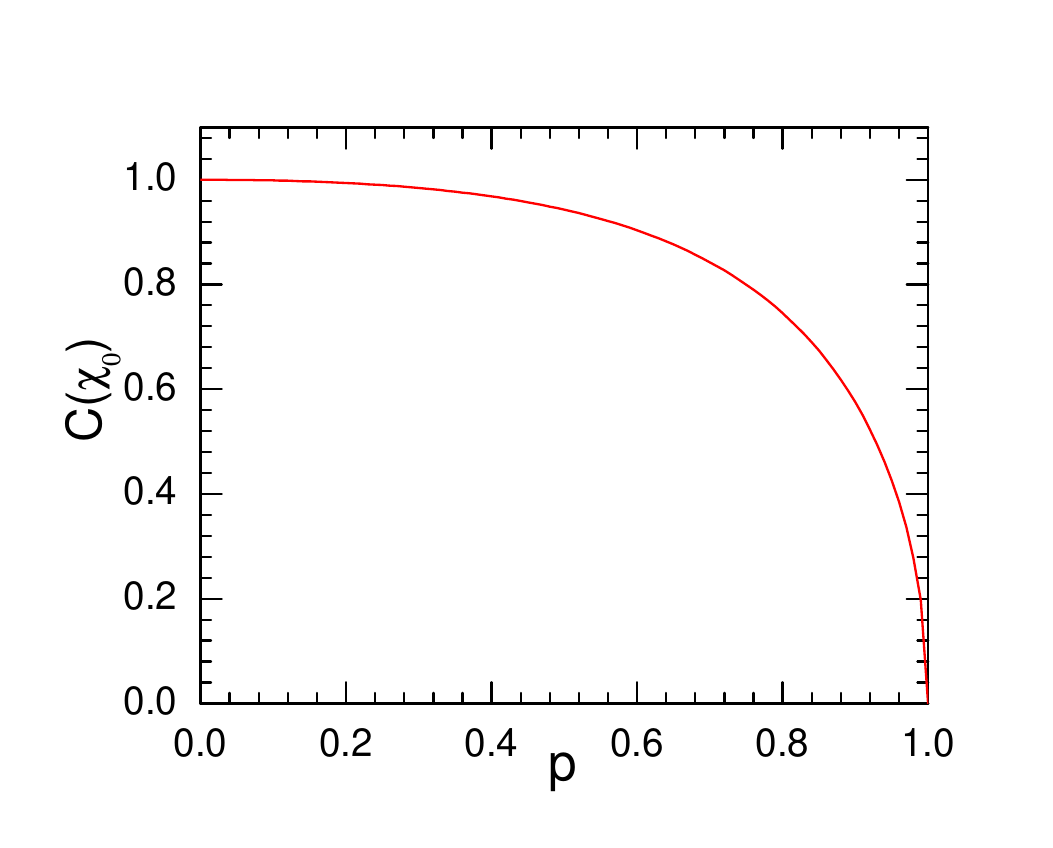}

Fig. 1: Concurrence of $|\chi_{0}\rangle$ vs channel parameter $p$. 
\end{document}